\documentclass[letterpaper, 10pt, conference]{ieeeconf}

\IEEEoverridecommandlockouts                              

\overrideIEEEmargins                                      
\usepackage{cite}

\usepackage{enumitem}

\usepackage{url}
\usepackage{graphicx}
\usepackage{amssymb}
\usepackage{booktabs}
\usepackage{algorithm, algpseudocode}
\usepackage[switch]{lineno}
\usepackage{amsmath, amsthm}
\usepackage{xcolor}

\newcommand{\reals}{{\mathbb{R}}}

\newcommand{\traj}[1]{\mathbf{#1}}
\newcommand{\trajpi}[1]{\boldsymbol{#1}}


\newcommand{\argmin}{\mathop{\rm argmin}}
\newcommand{\argmax}{\mathop{\rm argmax}}


\newcommand{\mnorm}[1]{{\left\vert\kern-0.25ex\left\vert\kern-0.25ex\left\vert #1 
    \right\vert\kern-0.25ex\right\vert\kern-0.25ex\right\vert}}

\newcommand{\mc}{\mathcal}

\newcommand{\mb}{\mathbb}

\newtheorem{definition}{Definition}

\newtheorem{remark}{Remark}
\newtheorem{proposition}{Proposition}
\newtheorem{assumption}{Assumption}

\newtheorem{problem}{Problem}

\newcommand{\tightdots}{\hbox to 1em{.\hss.\hss.}}

\usepackage{diagbox}
\usepackage[symbol]{footmisc}  
\usepackage{comment}
\usepackage{tikz}
\usepackage[hidelinks]{hyperref}
\allowdisplaybreaks[2]


\markboth{\journalname, VOL. XX, NO. XX, XXXX 2017}
{Author \MakeLowercase{\textit{et al.}}: Preparation of Papers for textsc{IEEE Control Systems
Letters} (November 2021)}

\begin{document}
\title{When the Correct Model Fails: The Optimality of Stackelberg Equilibria with Follower Intention Updates}
\author{Cayetana Salinas-Rodriguez, Jonathan Rogers, Sarah H.Q. Li
\thanks{
All authors are with the School of Aerospace Engineering at the Georgia Institute of Technology, Atlanta, GA 30332, USA (e-mail: cayesalinas\char64  gatech.edu, jonathan.rogers\char64 ae.gatech.edu, sarahli\char64 gatech.edu).
}%
}

\maketitle
\pagestyle{empty} 

\begin{abstract}
We study a two-player dynamic Stackelberg game where the follower's intention is unknown to the leader. Classical formulations of the Stackelberg equilibrium (SE) assume that the follower's best response (BR) function is known to the leader. However, this is not always true in practice. We study a setting in which the leader receives updated beliefs about the follower BR before the end of the game, such that the update prompts the leader and subsequently the follower to re-optimize their strategies. We characterize the optimality guarantees of the SE solutions under this belief update for both open loop and feedback information structures. Interestingly, we prove that in general, assuming an incorrect follower's BR may lead to a lower leader cost over the entire game than knowing the true follower's BR. We support these results with numerical examples in a linear quadratic (LQ) Stackelberg game, and use Monte Carlo simulations to show that the instances of incorrect BR achieving lower leader costs are non-trivial in collision avoidance LQ Stackelberg games. 
\end{abstract}

\section{Introduction}\label{sec:intro}
Autonomous systems have become increasingly prevalent in society, whether in the form of self-driving cars, delivery robots, or humanoids replacing household workers. As they are commonly deployed in environments shared by humans, vehicles, and other robots, their ability to \emph{interact} effectively with others becomes more critical to overall performance. In these scenarios, the system must learn the interaction model and optimize its strategy simultaneously. Active research areas such as intent estimation and inverse learning~\cite{ng_algorithms_2000, markakis_inverse_2015, bayesianIRL} \emph{implicitly} assume that learning the correct interaction model and solving for the certainty-equivalence strategy is the correct approach. However, we re-examine the question:

\emph{does knowing the correct interaction model lead to the optimal interaction strategy?}

If autonomous systems do not interact, i.e., if at most one autonomous system bases its strategy on the decisions of others, then that system can treat the others as dynamic obstacles and perform obstacle avoidance by estimating their dynamics online. In this case, our research question becomes trivial: knowing the true model leads to better interaction outcomes~\cite{optimal-filtering}. However, decision-making systems inherently interact in shared environments, and not modeling their interaction leads to incorrect predictions and unsafe outcomes.

To address this question, this work considers a two-player non-cooperative Stackelberg game~\cite{von_stackelberg_market_1934,basar_dynamic_nodate} where the leader estimates the follower's BR and optimizes its Stackelberg strategy under the current best BR guess simultaneously. We study both open loop (OL) and feedback (FB) information structures with linear time invariant (LTI) dynamics and a finite, discrete time horizon. We focus on strategy optimization and assume the estimated follower's BR comes from an oracle that may update the BR estimate during the control horizon.

\textbf{Contributions}. For OL Stackelberg equilibria, we prove a sufficient condition for when optimizing with respect to a correct BR belief leads to the optimal leader strategy (implying that the correct BR belief does not always lead to an optimal leader strategy), and demonstrate its connection to the time-inconsistency property of SE. For FB Stackelberg equilibria, we show that not all solutions imply a correct BR leads to the optimal leader strategy despite its time-consistency. Finally, numerical simulations of an LQ Stackelberg game with Bayesian updates of BR estimation demonstrate that these theoretical results frequently occur and lead to sub-optimal behavior when not accounted for. This work opens broader questions, such as whether the cumulative cost is the right metric for evaluating these problems, and whether accurately learning the intentions of other autonomous systems is always worth the effort.

 \section{Related Research}\label{sec:related_res}
First introduced in~\cite{von_stackelberg_market_1934} to model oligopolistic markets, the Stackelberg game and its dynamic extension~\cite{chen_stackelberg_1972,simaan_stackelberg_nodate,si_additional_nodate} are used extensively in control theory to model cyber-security threats~\cite{kar2017trends}, lane-merging scenarios~\cite{shi2025stackelberg}, and power-grid optimization~\cite{yu_real-time_2016}. An important consideration in Stackelberg games is the information structure; common structures include OL~\cite{kydland_noncooperative_1975}, FB~\cite{gardner_feedback_1977}, and closed loop (CL)~\cite{basar_closed-loop_1979}. Both OL and FB Stackelberg equilibria have been analytically solved for the discrete-time LQ game~\cite{gardner_feedback_1977, kydland_equilibrium_1977}.

Although computing the SE assumes that the leader knows the follower's BR, this is not always true in practice. Estimating opponents' intentions is the focus of inverse learning~\cite{markakis_inverse_2015} in interactive settings:~\cite{sadigh2016planning} uses inverse reinforcement learning~\cite{ng_algorithms_2000} to model autonomous car-human interaction,~\cite{Hoermann2017probabilisticPredictions} uses probabilistic motion planning to model autonomous vehicles, and~\cite{geiger2021learning} uses neural network predictions in multi-agent trajectory Nash games. Unknown opponent intention for Nash games is addressed in~\cite{peters_inferring_2021, liu_learning_2023, LUCIDGames}, while follower intent uncertainty in Stackelberg games is studied in~\cite{lauffer_no-regret_2024, balcan2015commitment, bollini2026learningbayesianstackelberggames} from an online learning and repeated game perspective. In contrast, this paper studies the one-shot dynamic Stackelberg game where the belief update occurs within the finite horizon, and explicitly challenges the assumption that the true belief yields the lowest cost by addressing the optimality conditions of the known SE under updated beliefs.

\section{The Stackelberg game under unknown follower intentions} \label{sec:problem_definition}
\textbf{Notation}
We use $\reals^n$ to denote a real-valued vector space of dimension $n$, $\reals^{n\times m}$ to denote a real-valued matrix with dimensions $n\times m$, and $\mathbb{N}$ to denote the set of all natural numbers. The discrete number set $\{0,\ldots, T\}$ is denoted as $[T]$ and the discrete number set $\{t_0,\ldots, t_f\}$ is denoted as $[t_0, t_f]$. 
Additionally, we denote the state trajectory as $\traj{x} = (x_{0}, \ldots, x_{T}) \in \reals^{n(T+1)}$, the control trajectory as $\traj{u} = (u_{0}, \ldots, u_{T-1}) \in \reals^{mT}$, and the strategy trajectory as $\trajpi{\pi} = (\pi_{0}, \ldots, \pi_{T-1}) \in \reals^{(m\times n)T}$.
Finally, we denote by $\mb{S}^{n}_{+}$ the space of $n\times n$ symmetric positive semi-definite matrices and $\mb{S}^{n}_{++}$ the corresponding space of positive definite matrices.
\subsection{Stackelberg dynamic game}\label{sec:stackelberg_game_definition}
We consider two players, a leader $L$ and a follower $F$ over a finite and discrete time horizon $[T]$. The players jointly follow deterministic, LTI system dynamics given by
\begin{align}
    {x}_{t+1} &= A {x}_t + B^L u^L_t + B^F u^F_t,\quad \forall \ t \in  [T-1],
    \label{eq:dynamics}
\end{align}
where ${x}_t \in \reals^{n}$ is the shared time-dependent system state. For player $ i \in \{L,F\} $,  $B^i \in \reals^{n\times m_i},$ is its control matrix, and  $u^i_t\in \reals^{m_i}$ is its time-varying control. From initial state $x_0\in\reals^n$ and under control sequences $(\traj{u}^L,\traj{u}^F)$, we denote the feasible state trajectory from~\eqref{eq:dynamics} as
\begin{equation}
\begin{aligned}
    \mathbf{x} = Hx_0 + G^L \traj{u}^L + G^F \traj{u}^F \in \reals^{n(T+1)},
     \label{eq:state_trajectory}
\end{aligned}
\end{equation}
where the matrices $H \in \reals^{n(T+1)\times n}$, $G^i \in \reals^{n(T+1) \times m_iT}$ for $i\in\{L,F\}$ are defined in~\eqref{eq:H_G_matrix}.
For a time horizon $[c, d]$ with horizon length $\hat{T} = d-c$, we use $H_{c:d}\in \reals^{n(\hat{T} + 1)\times n}$, $G^i_{c:d} \in \reals^{n(\hat{T} + 1)\times m_i\hat{T}}$ to denote the state trajectory in horizon $[c,d]$ from state $x_c$ as
\begin{equation}
\begin{aligned}
    x_{c:d} = H_{c:d}x_c + G^L_{c:d} {u}_{c:d-1}^L + G^F {u}_{c:d-1}^F \in \reals^{n(\hat{T} + 1)}.
     \label{eq:truncated_state_traj}
\end{aligned}
\end{equation}
\textbf{Information Structure}.
Players select strategies to generate control based on the available information at each time step $t$, which we denote by $\mc{I}^i_t$ for $i \in \{L, F\}$. We consider two information structures: 
\begin{enumerate}
    \item open loop information $\mathcal{I}^i_t = \{ x_0 \}, \forall t \in [T-1]$;
    \item feedback information $\mathcal{I}^i_t = \{ x_t\}, \ \forall t \in [T-1]$.
\end{enumerate}
Each information structure leads to a distinct strategy set~\cite{basar_dynamic_nodate}. We denote player $i$'s strategy set by $\Pi^i_t$. The strategy sets corresponding to  the two information structures above are explicitly given by
\begin{equation}\label{eq:information_structures}
    \Pi^i_t = \begin{cases}
         \mc{U}^i_t = \{u^i_t \in \reals^{m_i}\},& \mathcal{I}^i_t = \{ x_0 \}\\
         \mc{K}^i_t = \{K^i_t \in \reals^{m_i \times n}\}, & \mathcal{I}^i_t = \{ x_t\}
    \end{cases}, i \in \{L,F\}.
\end{equation}
Under FB information structure, the  control chosen is $u^i_t~=~- K^i_t x_t$ for $i \in \{L,F\}$. 
We use $\Pi^i_{0:T-1}$ to denote player $i$'s strategy domain over the finite horizon $[T-1]$, and $\trajpi{\pi}^i\in \Pi^i_{0:T-1}$ to denote a specific strategy for player $i$. When the information structure context \(\mc{I}^i\) is clear, we use $\trajpi{\pi}^i$ to denote \(\trajpi{\pi}^i(\mc{I}^i)\). 

\noindent\textbf{Player Objectives}. At time step $t \in [T-1]$, each player $i \in \{L,F\}$ has a state and control-dependent cost function $J^i_t: \reals^{n\times m_L\times m_F} \mapsto \reals$. At time step $T$, players have a state-dependent cost function $J^i_T: \reals^{n} \mapsto \reals$.
Each player's total cost over time horizon $[T]$ is given by 
\begin{equation}
    J^{i}_{0:T}\big(\traj{x}, \traj{u}^L, \traj{u}^F\big)= J^i_T\big(x_T\big)\\+\sum_{t=0}^{T-1} J^i_t\big(x_t, u^L_t, u^F_t\big).
    \label{eq:cost_definition}    
\end{equation}
\begin{assumption}\label{assum:convexity}
The leader and follower objectives  $J^{L}_{0:T}$ and $ J^{F}_{0:T}$ are continuously differentiable and strictly convex in $\big(\traj{u}^L,  \traj{u}^F\big)$, and continuously differentiable and convex in $\traj{x}$.
\end{assumption}
\noindent\textbf{Best response (BR)}.  Under a leader's strategy $\trajpi{\pi}^L$, the follower seeks to minimize its objective $J^F_{0:T}$~\eqref{eq:cost_definition} using its strategy $\trajpi{\pi}^F$. 
This problem is formulated as
\begin{equation}
\begin{aligned}
\label{eq:follower_problem}
\min_{\trajpi{\pi}^F\in\Pi^F_{0:T-1}}& \quad  J^F_{0:T}\big(\traj{x}, \trajpi{\pi}^L(\mc{I}^L), \trajpi{\pi}^F(\mc{I}^F)\big)\\
\text{s.t.}\quad 
& \traj{x} =  Hx_0 + G^L \trajpi{\pi}^L + G^F \trajpi{\pi}^F.
    \end{aligned}
\end{equation}
The follower's BR is the $\argmin$ set of~\eqref{eq:follower_problem}. Each leader strategy maps to a set of optimal  follower responses. Under Assumption~\ref{assum:convexity}, this mapping is unique and continuous in the leader strategy $\trajpi{\pi}^L$~\cite[Sec. 7.2]{basar_dynamic_nodate} and can be parametrized by the follower's objective function. We denote this continuous function as $b^\star:\Pi_{0:T-1}^L\mapsto \Pi_{0:T-1}^F$, so that $b^\star(\trajpi{\pi}^L)$ is the optimal follower strategy under~\eqref{eq:follower_problem} for all $\trajpi{\pi}^L \in \Pi_{0:T-1}^L$. Additionally, we use $\mc{F}$ to denote the set of all continuous follower response functions, given by
\begin{equation}\label{eq:all_responses}
    \mc{F} = \{b \ | \ b:\Pi_{0:T-1}^L\mapsto \Pi_{0:T-1}^F\}.
\end{equation}
Next, we define the Stackelberg dynamic game. 
\begin{definition}[Stackelberg Dynamic Game]\label{def:stackelberg_game}
In a two player Stackelberg dynamic game with leader $L$ and follower $F$, the leader seeks a strategy $\trajpi{\pi}^L$ that solves
\begin{equation}
 \begin{aligned}
\label{eq:leader_problem}
\min_{\trajpi{\pi}^L\in \Pi^L_{0:T-1}} \quad &  J^{L}_{0:T}\big(\traj{x}, \trajpi{\pi}^L(\mc{I}^L), b^\star(\trajpi{\pi}^L)\big)\\
\text{s.t.}\quad 
& \traj{x} = Hx_0 + G^L \trajpi{\pi}^L + G^F b^\star(\trajpi{\pi}^L),
\end{aligned}   
\end{equation}
while the follower seeks a strategy $\trajpi{\pi}^F$ that solves~\eqref{eq:follower_problem}. In particular, the leader always announces $\trajpi{\pi}^L$ first and the follower responds with $\trajpi{\pi}^F$.
\end{definition}
Under OL and FB information structures and with known follower's BR, joint strategies that solve~\eqref{eq:leader_problem} exist and lead to different mathematical guarantees~\cite{basar_dynamic_nodate}.
We discuss each specific guarantee in Sections~\ref{sec:optimality_conditions_OLSE_intention} and~\ref{sec:feedback_opt_break}, respectively. 
\subsection{Stackelberg game under follower BR belief updates} \label{sec:ambiguous_stackelberg_game_definition}
In this subsection, we formulate a Stackelberg dynamic game in which the leader does not know the follower's BR $b^\star$ but has time-varying beliefs about it. 
 \begin{assumption}[Follower's BR]\label{ass:BR_belief}
 Over the time horizon $[T]$, the follower has a time-invariant BR given by
 \begin{equation}
     b^\star \in \mc{F}.
 \end{equation}
 The leader has two beliefs about the follower's BR: at $t=0$, it is $b^1\in\mc{F}$, and at an \textbf{update time} $\tau \in (0,T]$, it is $b^2 \in \mc{F}$.  
 \end{assumption}
 The specific method used to retrieve the follower's beliefs is less important here. However, many methods from estimation and machine learning can  predict the follower's BR under different observation inputs \cite{ng_algorithms_2000,neu_apprenticeship_2012,markakis_inverse_2015,bayesianIRL}. 

\begin{figure}[H]
    \centering
    \includegraphics[width=1
    \linewidth]{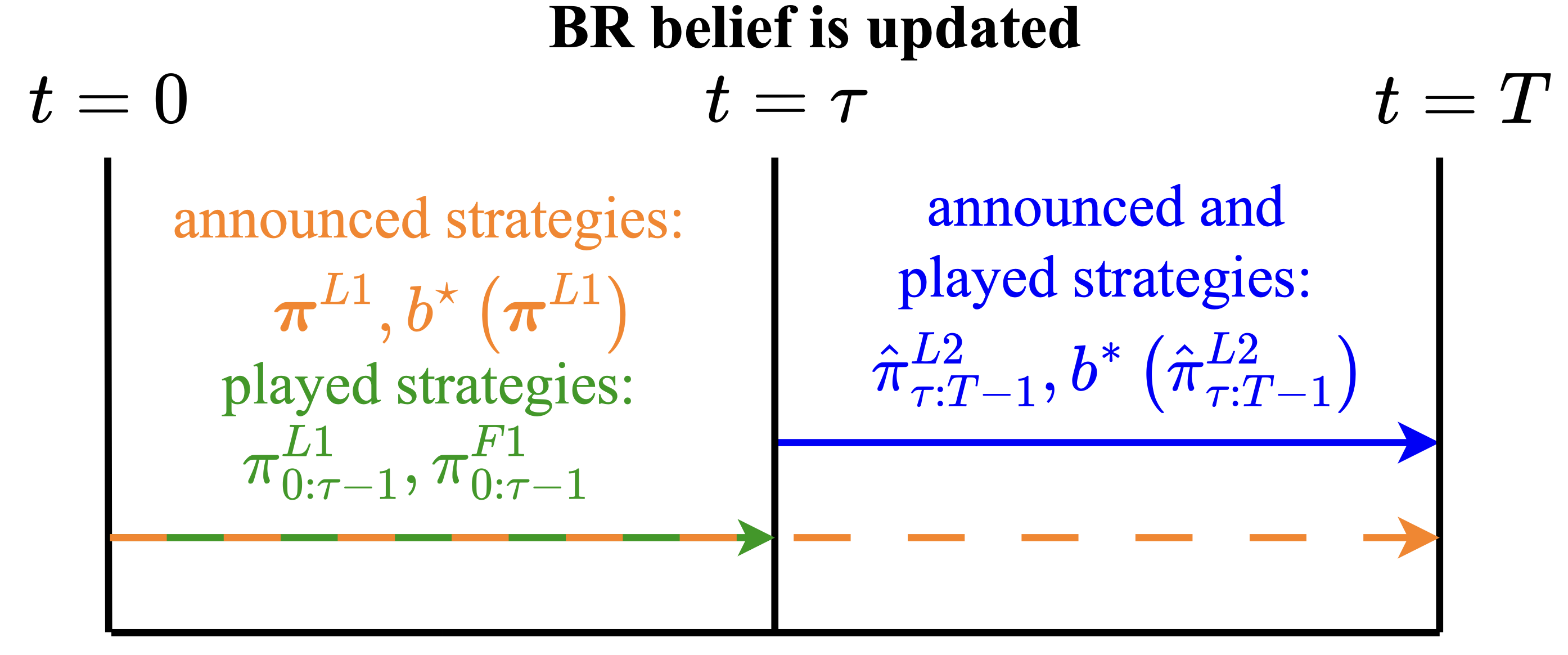}
    \caption{Dynamics of Stackelberg game with BR update.}
    \label{fig:scenario_I_diagram}
\end{figure} 
\noindent\textbf{Stackelberg game with BR belief update}. Under Assumption~\ref{ass:BR_belief}, consider a Stackelberg game in which the leader does not know the follower's true BR $b^\star\in\mc{F}$. In this setting, the leader first solves the $[T]$-horizon Stackelberg game~\eqref{eq:leader_problem} under belief $b^1\in\mc{F}$ to compute a strategy $\trajpi{\pi}^{L1} \in \Pi^L_{0:T-1}$ and the follower solves its $[T]$-horizon problem~\eqref{eq:follower_problem} under its true BR $b^\star$ to compute a strategy $\trajpi{\pi}^{F1} = b^\star(\trajpi{\pi}^{L1})$. For $\tau \in (0,T]$ time steps, the system dynamics~\eqref{eq:truncated_state_traj} evolve as
\begin{equation}\label{eq:pre_update_states}
    x^1_{0:\tau} =H_{0:\tau}x_0 + G^L_{0:\tau} \pi^{L1}_{0:\tau-1} + G^F_{0:\tau} \pi^{F1}_{0:\tau-1}.
\end{equation}
At the update time $\tau$, the system state arrives at state $x_{\tau}$. The leader receives an updated BR belief $b^2\in \mc{F}$, re-solves the Stackelberg game under $b^2$ for the remaining horizon $[\tau,T]$, and executes a new optimal strategy $\hat{\pi}^{L2}_{\tau:T-1}$ that minimizes  
\begin{equation}\label{eq:truncated_problem}
\begin{aligned}
&\min_{\pi^L_{\tau:T-1}\in\Pi^L_{\tau:T-1}}  J^L_{\tau:T}\left(x_{\tau:T},\pi^L_{\tau:T-1},b^2(\pi^L_{\tau:T-1})\right)\\
&\text{s.t. }
\quad x_{\tau:T} = H_{\tau:T} x^1_{\tau}+ G^L_{\tau:T} \pi^L_{\tau:T-1} + G^F_{\tau:T} b^2(\pi^L_{\tau:T-1}).\\
\end{aligned}
\end{equation}
Once the leader announces new strategy $\hat{\pi}^{L2}_{\tau:T-1}$, the follower also updates its best response  to $b^\star(\hat{\pi}^{L2}_{\tau:T-1})$, and the system dynamics evolve from $x_\tau$ as 
\begin{equation}\label{eq:post_update_states}
    x^2_{\tau:T} =H_{\tau:T}x^1_{\tau} + G^L_{\tau:T} \hat{\pi}^{L2}_{\tau:T-1} + G^F_{\tau:T} b^\star(\hat{\pi}^{L2}_{\tau:T-1}).
\end{equation}
Over the time horizon $[T]$ and under the BR belief update, the leader's composite strategy is given by
\begin{equation}
    \hat{\trajpi{\pi}}^L = \left[\pi^{L1}_{0:\tau-1}, \hat{\pi}^{L2}_{\tau:T-1}\right],
    \label{eq:unknown_BR_strategy_leader}
\end{equation}
the follower's composite strategy given by
\begin{equation}
    \hat{\trajpi{\pi}}^F = [\pi^{F1}_{0:\tau-1}, b^\star(\hat{\pi}^{L2}_{\tau:T-1})],
\end{equation}
and the composite state  trajectory is given by
 \begin{equation}
     \hat{\traj{x}}=[x^1_{0:\tau-1}, x^2_{\tau:T}]\in \reals^{n(T+1)}.
     \label{eq:traj_BR_update}
 \end{equation}
To highlight how the leader's total cost depends on the change in its BR belief, we break down the leader's cost~\eqref{eq:cost_definition} into the components $J^L_{0:\tau-1}(b^1)$ and $J^L_{{\tau:T}}(b^2)$, defined as
\begin{align}
    J^L_{0:\tau-1}(b^1) &= \sum_{t=0}^{\tau-1}J^L_t\big(x^1_t, \pi^{L1}_t, b^\star({{\pi}}_{t}^{L1})\big), \label{eq:cost_decomp_part1}\\
    J^L_{{\tau:T}}(b^2) &= J_T^L(x^2_T) + \sum_{t=\tau}^{T-1}J^L_{t}\big({x}^2_t, \hat{{\pi}}_t^{L2}, b^\star(\hat{{\pi}}_t^{L2}) \big),
    \label{eq:cost_decomp_part2}
\end{align}             
such that $J^L_{0:T}(\hat{\traj{x}}, \hat{\trajpi{\pi}}^L, \hat{\trajpi{\pi}}^F) = J^L_{0:\tau-1}(b^1) + J^L_{{\tau:T}}(b^2)$.

This BR belief update setting captures applications of Stackelberg games in iterative or data-driven settings such as~\cite{zhao2024cooperation,zhang2025motion}. However, it differs from active inverse learning game-theoretic approaches in which the control is strictly selected to uncover follower BR~\cite{ward_active_2024,wu2022inverse}. 

The key question we explore in this manuscript is whether the leader’s cost is the lowest when it knows the true BR $b^\star$ versus all possible BR beliefs $(b^1, b^2) \in \mc{F}$. 

\begin{problem}[Stackelberg game under BR belief update]\label{pro:change_br}
We consider a Stackelberg game in which the leader updates its belief about the follower’s BR at time $\tau$ and the follower always utilizes BR $b^\star \in \mc{F}$. Does the true BR belief always lead to the lowest cost?   I.e.,  is it true that for all beliefs 
\begin{multline}
J^L_{0:\tau-1}(b^\star) + 
      J^L_{{\tau:T}}(b^\star) 
      \leq 
      J^L_{0:\tau-1}(b^1) + 
      J^L_{{\tau:T}}(b^2), \\ \forall \ b^1, b^2\in\mc{F}.
      \label{eq:question_1_paper}    
\end{multline}
\end{problem} 
We investigate Problem~\ref{pro:change_br} for both the OL and FB information structures~\eqref{eq:information_structures}.  

\textbf{Motivating example.} An autonomous food delivery robot must complete an order while navigating around people along its path. Because its actions influence how people move, the robot models their intentions as part of its strategy optimization. But does correctly estimating those intentions always yield the optimal control strategy? After discussing the answer to this question in Section~\ref{sec:issues_solutions}, we provide simulation results for a similar scenario in Section~\ref{sec:simulations}.

\section{SE under BR Updates} \label{sec:issues_solutions}
This section addresses Problem~\ref{pro:change_br}
under OL and FB information structures~\eqref{eq:information_structures}. We first define the optimality conditions of both SE solutions, followed by a discussion on whether the SE solutions with BR belief updates under the true BR belief satisfy~\eqref{eq:question_1_paper}.   
\subsection{OL Stackelberg equilibrium (OLSE) under BR updates}
\label{sec:optimality_conditions_OLSE_intention}
We first evaluate the OLSE guarantee~\cite[Sec.7.2]{basar_dynamic_nodate} under BR belief updates.
When the leader solves the Stackelberg game~\eqref{eq:leader_problem} from state $x_0 \in \reals^{n}$ against a follower whose BR is $b^\star \in \mc{F}$, the leader selects the control sequence $\traj{u}^{L\star}$ that satisfies
\begin{equation}
\begin{aligned}
        \traj{u}^{L\star} & \in \argmin_{\traj{u}^{L}\in \mc{U}^L_{0:T-1}} J^L_{0:T}\left(\traj{x}, \traj{u}^{L}, b^\star\big(\traj{u}^{L}\big)\right)\\
        \text{s.t. } & \traj{x} = Hx_0+ G^L\traj{u}^{L} + G^F b^\star\big(\traj{u}^{L}\big).    
\end{aligned}
\label{eq:optCond_OL_intention}    
\end{equation} 
The optimality condition of the OLSE implies that the leader's OL control $\traj{u}^{L\star}$ is optimal strictly at initial state $x_0$,  with respect to all control sequences $\traj{u}^L \in \reals^{m_LT}$ 
announced at $t=0$, and executed over the entire horizon $[T]$.

An important property of certain SE is the so-called time-inconsistency, which we now define.
\begin{definition}[SE Time-inconsistency~\cite{basar_time_1989}] \label{def:time_inconsistency}
Consider the SE $\left(\trajpi{\pi}^{L\star}, \trajpi{\pi}^{F\star}\right)$ that solves 
\eqref{eq:follower_problem} and \eqref{eq:leader_problem} from initial state $x_0 \in \reals^n$ with information structures $(\mc{I}^L, \mc{I}^F)$.
If at time step $t\in [T]$ and state $x_t \in \reals^n$ under dynamics~\eqref{eq:state_trajectory},
we re-solve \eqref{eq:follower_problem} and \eqref{eq:leader_problem} for the remaining horizon and obtain SE $\left(\hat{\pi}^{L}_{t:T-1}, \hat{\pi}^{F}_{t:T-1}\right)$.
We say that the SE $\left(\trajpi{\pi}^{L\star}, \trajpi{\pi}^{F\star} \right)$ is \textbf{time-inconsistent} if
\begin{equation}
    \pi^{L\star}_{t:T-1} \neq \hat{\pi}^L_{t:T-1}.
\end{equation}
If the SE is not time-inconsistent, then it is \textbf{time-consistent}.
\end{definition}
The OLSE is known to be time-inconsistent~\cite{basar_time_1989}, meaning that we can separately analyze the pre-update and post-update horizons. Over $[0,\tau-1]$, we first consider  whether $J^L_{0:\tau-1}(b^\star) \leq J^L_{0:\tau-1}(b^1)$ for all $b^1 \in \mc{F}$. From~\eqref{eq:optCond_OL_intention} and pre-update state dynamics~\eqref{eq:pre_update_states}, the leader executes $u^{L\star}_{0:\tau-1}$ up to time step $\tau$. In this truncated horizon, $u^{L\star}_{0:\tau-1}$ is in general \emph{not} optimal for $J^L_{0:\tau-1}$~\eqref{eq:cost_decomp_part1}. In fact, the optimal leader's strategy over $[0,\tau-1]$ is one that minimizes the truncated leader cost under true follower BR, $J^L_{0:\tau-1}(b^\star)$, which is different from the $[0,\tau-1]$ component of $\traj{u}^{L\star}$~\eqref{eq:optCond_OL_intention}, the minimizer of the full horizon leader cost under true follower BR, $J^L_{0:T}(b^\star)$. In particular, this implies that an alternative belief $b^1 \in \mc{F}$, when solved over horizon $[T]$, may lead to the control sequence $u^{L1}_{0:\tau-1}$ that minimizes $J^L_{0:\tau-1}(b^\star)$. \ref{app:example1} shows an instance where this situation arises in a LQ Stackelberg game.

Over the time horizon $[\tau, T]$,  we derive a sufficient condition for when the true BR belief minimizes the leader's cost $J^L_{{\tau:T}}$~\eqref{eq:cost_decomp_part2}. After the BR update at $\tau$, we compare the leader’s cost when it assumes a belief $b^2 \in \mc{F}$ to when it re-optimizes under the true BR belief $b^\star$. 

\begin{proposition}\label{prop:part2_OLSE_prob1}
Consider the state trajectories $x^1_{0:\tau}$ and $x^\star_{0:\tau}$~\eqref{eq:pre_update_states} when the leader has BR belief $b^1$ and $b^\star$, respectively, prior to BR update at $\tau < T$ in the OL information setting~\eqref{eq:information_structures}. If the states $x^1_{\tau} = x^\star_{\tau}$. Then,
 \begin{equation}
    J^L_{\tau:T}(b^\star) \leq J^L_{\tau:T}(b^2), \quad \forall b^2 \in \mc{F},
    \label{eq:opt_OLSE_prob1_second_half}
\end{equation}
where $b^2$ is the follower BR belief after time $\tau$.
\end{proposition}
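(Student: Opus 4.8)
The plan is to recognize $J^L_{\tau:T}(b^\star)$ as the \emph{optimal value} of the truncated post-update problem~\eqref{eq:truncated_problem} with the updated belief set to the truth, $b^2=b^\star$, and $J^L_{\tau:T}(b^2)$ as the \emph{realized value} of that same objective when the leader instead inserts the strategy it optimized for the belief $b^2$. The crucial observation is that, by Assumption~\ref{ass:BR_belief} and the cost decomposition~\eqref{eq:cost_decomp_part2}, the follower always responds with the true BR $b^\star$ in \emph{both} scenarios; consequently both quantities are values of one and the same scalar function --- the leader's post-update cost scored along the true follower response --- and the claim~\eqref{eq:opt_OLSE_prob1_second_half} collapses to the elementary fact that a minimum is no larger than any feasible value.

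First I would pin down the common initial state. The hypothesis $x^1_\tau = x^\star_\tau =: x_\tau$ guarantees that~\eqref{eq:truncated_problem} is posed from the identical state $x_\tau$ in both the $b^\star$ and the $b^2$ scenarios, so both share the feasible set $\Pi^L_{\tau:T-1}$ and the same realized trajectory map once $b^\star$ is substituted for the follower. I would then define the realized post-update leader cost as a function of the leader's post-update strategy,
\begin{equation}
\Phi(\pi^L_{\tau:T-1}) := J^L_{\tau:T}\!\left(x_{\tau:T},\, \pi^L_{\tau:T-1},\, b^\star(\pi^L_{\tau:T-1})\right),
\end{equation}
where $x_{\tau:T}=H_{\tau:T}x_\tau + G^L_{\tau:T}\pi^L_{\tau:T-1}+G^F_{\tau:T}b^\star(\pi^L_{\tau:T-1})$ is the true-response trajectory of~\eqref{eq:post_update_states}. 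By~\eqref{eq:cost_decomp_part2} this gives $J^L_{\tau:T}(b)=\Phi\big(\hat{\pi}^{Lb}_{\tau:T-1}\big)$ for any belief $b$, where $\hat{\pi}^{Lb}_{\tau:T-1}$ solves~\eqref{eq:truncated_problem} under belief $b$.

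Next I would invoke optimality. When the updated belief coincides with the truth, $b=b^\star$, problem~\eqref{eq:truncated_problem} minimizes exactly $\Phi$, so that $\hat{\pi}^{L\star}_{\tau:T-1}\in\argmin_{\pi^L_{\tau:T-1}\in\Pi^L_{\tau:T-1}}\Phi(\pi^L_{\tau:T-1})$ and $J^L_{\tau:T}(b^\star)=\min_{\pi}\Phi(\pi)$. For any other belief $b^2\in\mc{F}$, the strategy $\hat{\pi}^{L2}_{\tau:T-1}$ is merely one feasible point of this same minimization, whence
\begin{equation}
J^L_{\tau:T}(b^\star)=\min_{\pi^L_{\tau:T-1}\in\Pi^L_{\tau:T-1}}\Phi(\pi^L_{\tau:T-1}) \;\leq\; \Phi\big(\hat{\pi}^{L2}_{\tau:T-1}\big)=J^L_{\tau:T}(b^2),
\end{equation}
which is~\eqref{eq:opt_OLSE_prob1_second_half}. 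Strict convexity (Assumption~\ref{assum:convexity}) enters only to ensure that $b^\star$, hence $\Phi$ and its minimizer, are well defined.

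The main obstacle --- and the reason the hypothesis $x^1_\tau=x^\star_\tau$ is indispensable --- is that $J^L_{\tau:T}(b^2)$ is a genuine \emph{plug-in} cost: the strategy $\hat{\pi}^{L2}_{\tau:T-1}$ is optimal for the \emph{wrong} objective $J^L_{\tau:T}(\cdot,b^2(\cdot))$, yet it is evaluated against the \emph{true} response $b^\star$ in~\eqref{eq:cost_decomp_part2}. One must therefore be careful to score both sides along $b^\star$ rather than along the leader's belief, so that they share the single objective $\Phi$ whose minimum is $J^L_{\tau:T}(b^\star)$. Were the two pre-update trajectories to terminate at distinct states $x^1_\tau\neq x^\star_\tau$, the post-update subproblems would live over different realized-trajectory maps and $\Phi$ itself would differ between the scenarios, so the ``minimum versus feasible value'' comparison would break down --- precisely the gap that the time-inconsistency discussion over $[0,\tau-1]$ exploits to let an incorrect $b^1$ outperform $b^\star$.
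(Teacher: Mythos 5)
Your proposal is correct and follows essentially the same route as the paper's proof: both reduce the claim to the optimality of the post-update OL solution under the true belief $b^\star$ at the common initial state $x^\star_\tau = x^1_\tau$, observing that any alternative belief $b^2$ merely induces another feasible control sequence for that same minimization, whose cost evaluated along the true follower response can therefore only be weakly larger. Your writeup is a more careful formalization (introducing the plug-in objective $\Phi$ and making explicit that both sides must be scored along $b^\star$), but the underlying argument is identical.
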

\begin{proof}
The result follows directly from the optimality condition of OLSE~\eqref{eq:optCond_OL_intention} 
for the post-BR-update leader problem~\eqref{eq:truncated_problem}.
Since $\hat{u}^{L\star}_{\tau:T-1}$~\eqref{eq:unknown_BR_strategy_leader} minimizes the leader’s cost $J^L_{\tau:T}$ under the true BR $b^\star$ for the new ``initial" state $x^\star_\tau \in \reals^n$. Then, for any other $\hat{u}^{L}_{\tau:T-1} \in \reals^{m_L(T-\tau)}$ announced at $\tau$ and state $x^\star_\tau$, the cost will be greater. Since the belief $b^2 \in \mc{F}$ induces a corresponding control sequence $\hat{u}^{L2}_{\tau:T-1}$, 
this yields~\eqref{eq:opt_OLSE_prob1_second_half}.
\end{proof}
\begin{remark}
    Proposition~\ref{prop:part2_OLSE_prob1} highlights that the wrong BR belief may obtain a lower cost over $[\tau, T]$. The condition $x^1_{\tau} = x^\star_{\tau}$ guarantees that the true follower's BR belief $b^\star$ minimizes the remaining cost. However, it is quite restrictive and in general false for a belief $b^1 \neq b^\star$. Thus, a wrong belief may obtain a lower cost after the update.
\end{remark} 

We have shown that over the horizon $[0,\tau-1]$ there may be a BR belief that leads to better leader cost. Then, by Proposition~\ref{prop:part2_OLSE_prob1}, there may also be a better BR belief on $[\tau,T]$. Due to the time-inconsistency of the OLSE (Def.~\ref{def:time_inconsistency}), there may be a joint pair $(b^1, b^2)\in\mc{F}$ that achieve lower leader cost over $[T]$ when compared to always assuming true BR $b^\star$. Thus, in general, \eqref{eq:question_1_paper} will not hold for all $(b^1, b^2) \in \mc{F}$. We give a numerical example where a wrong set of beliefs $(b^1, b^2)$ obtain a lower cost than the true BR belief $b^\star$ in a LQ game in~\ref{app:example1}.

\subsection{FB Stackelberg equilibrium (FSE) under BR updates}\label{sec:feedback_opt_break}
This section considers whether FSE under follower BR updates satisfies \eqref{eq:question_1_paper}. We first adapt the FSE optimality condition~\cite[Sec.3.6]{basar_dynamic_nodate} under BR updates. When the leader solves the Stackelberg game~\eqref{eq:leader_problem} against a follower whose BR is $b^\star \in \mc{F}$, the leader selects the control FB gain $K^{L\star}_t$ that satisfies
\begin{multline}
        K^{L\star}_t \in \argmin_{K^L_t \in \mc{K}^L_t} J^L_{t:T}(x_{t:T}, K^L_t, K^{L\star}_{t+1:T-1},\\ b^\star(K^L_t), K^{F\star}_{t+1:T-1}), \quad\forall t\in [T],
        \label{eq:feedback_optimality_ambiguous}
\end{multline}
and $K^{F\star}_t = b^\star\left(K^{L\star}_t\right)$ is the follower's BR~\eqref{eq:follower_problem}. The optimality condition~\eqref{eq:feedback_optimality_ambiguous} implies that the FB strategy at time $t$ is optimal against leader strategies where only $K^{L}_t$ can be changed and $K^{L}_{\hat{t}}$ is fixed for time steps $\hat{t} > t$. This optimality definition ensures that, unlike the OLSE, FSE is time-consistent~\cite{basar_time_1989}. However, we will show that this is still not enough to affirmatively answer Problem~\ref{pro:change_br}.\\
We now define another property of the FSE that will be relevant in the discussion of whether assuming the true BR will lead to obtaining the lowest cost.
\begin{definition}[Markov Perfect FSE (MPFSE)~\cite{fudenberg1991game}]
    A joint FB strategy $\big( \traj{K}^{L\star}, \traj{K}^{F\star}\big)$ is a MPFSE if for all $t \in [T-1]$, in addition to satisfying~\eqref{eq:feedback_optimality_ambiguous}, the leader's strategy $K^{L\star}_{t:T-1}$ satisfies
    \begin{multline} \label{eq:MPFSE}
        J^L_{t:T}\Big(x_{t:T},K^{L\star}_{t:T-1}, b^\star( K^{L\star}_{t:T-1})\Big) \\\leq  J^L_{t:T}\Big(x_{t:T}, K^{L}_{t:T-1}, b^\star(K^{L}_{t:T-1})\Big), \ \forall   K^{L}_{t:T-1} \in \mc{K}^L_{t:T-1}.
    \end{multline}
\end{definition}
MPFSE is a stronger condition than FSE, which only implies time-consistency~\cite{basar_time_1989}. In addition to ensuring that the the joint strategy is consistent when resolved in future time steps, MPFSE also ensures that $K^{L\star}_{t:T-1}$ is optimal in the set of all leader FB strategies from time $t$ to $T-1$~\cite[Chp.13.2]{fudenberg1991game}.

Markov perfect Nash equilibrium exists for certain subsets of LQ dynamic games~\cite{maskin1987theory,maskin1988theory}, while MPFSE's existence is less well explored. In~\cite{goktas2022zero}, MPFSE is shown  to exist for two player zero-sum Stackelberg games. 

Since the FSE is time-consistent, when the belief over the horizon remains constant, $b^1 = b^2 = b$, the sequence of gains $\hat{\traj{K}}^{L}$\eqref{eq:unknown_BR_strategy_leader} will be the same as $\traj{K}^{L}$ which solve \eqref{eq:feedback_optimality_ambiguous} under the belief $b \in \mc{F}$. 
\begin{proposition}\label{prop:feedback_prob1}
    Consider the BR update~\eqref{eq:unknown_BR_strategy_leader} with FB information structure~\eqref{eq:information_structures}. If the FSE is not Markov perfect, there exist a BR $(b^1, b^2)\in \mc{F}$, $(b^1, b^2) \neq b^\star$ that yields a lower leader's cost $J^L_{0:T}$~\eqref{eq:cost_definition} than the true belief $b^\star\in\mc{F}$, i.e., 
    \begin{equation}
      J^L_{0:\tau-1}(b^1) + 
      J^L_{{\tau:T}}(b^2)
      <
       J^L_{0:\tau-1}(b^\star) + 
      J^L_{{\tau:T}}(b^\star).
      \label{eq:other_Model_better_FB}
    \end{equation}
\end{proposition}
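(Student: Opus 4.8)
The plan is to reduce Problem~\ref{pro:change_br} to improving only the post-update tail cost, and then to exploit the failure of Markov perfection there. First I would fix the pre-update belief to the truth, $b^1 = b^\star$. By the time-consistency of the FSE noted just before the statement, the leader then plays the true FSE gains $K^{L\star}_{0:\tau-1}$ on $[0,\tau-1]$, the follower responds with $b^\star$, and the system reaches the equilibrium state $x^\star_\tau$; consequently $J^L_{0:\tau-1}(b^1) = J^L_{0:\tau-1}(b^\star)$ and $x^1_\tau = x^\star_\tau$. With the first phase pinned to the equilibrium path, it remains to produce a post-update belief $b^2 \neq b^\star$ for which $J^L_{\tau:T}(b^2) < J^L_{\tau:T}(b^\star)$; summing the two phases then yields~\eqref{eq:other_Model_better_FB} with the admissible pair $(b^1,b^2) = (b^\star, b^2) \neq b^\star$.

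Next I would extract a strictly better tail strategy from the hypothesis. Since the FSE is not Markov perfect, the MPFSE inequality~\eqref{eq:MPFSE} is violated, so there exists a leader tail strategy $\tilde{\traj{K}}^L_{\tau:T-1} \in \mc{K}^L_{\tau:T-1}$ with
\[
J^L_{\tau:T}\big(x^\star_{\tau:T}, \tilde{\traj{K}}^L_{\tau:T-1}, b^\star(\tilde{\traj{K}}^L_{\tau:T-1})\big) < J^L_{\tau:T}\big(x^\star_{\tau:T}, K^{L\star}_{\tau:T-1}, b^\star(K^{L\star}_{\tau:T-1})\big),
\]
evaluated from the reached state $x^\star_\tau$, where $K^{L\star}_{\tau:T-1}$ is the FSE tail the leader would play under the true belief $b^2 = b^\star$ (again by time-consistency). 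Strict convexity (Assumption~\ref{assum:convexity}) guarantees the tail minimizer is attained, so the gap is strict.

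Finally I would realize $\tilde{\traj{K}}^L_{\tau:T-1}$ through a belief. Because $\mc{F}$ contains every continuous map from leader strategies to follower strategies, I would construct $b^2 \in \mc{F}$ that (i) agrees with the truth at the target, $b^2(\tilde{\traj{K}}^L_{\tau:T-1}) = b^\star(\tilde{\traj{K}}^L_{\tau:T-1})$, and (ii) shapes the believed follower reaction around the target so that the stagewise post-update optimality conditions~\eqref{eq:feedback_optimality_ambiguous} select exactly $\hat{\pi}^{L2}_{\tau:T-1} = \tilde{\traj{K}}^L_{\tau:T-1}$. Property (i) forces the realized follower response, and hence the realized trajectory and cost, to coincide with the true evaluation of the target, giving $J^L_{\tau:T}(b^2) = J^L_{\tau:T}(x^\star_{\tau:T}, \tilde{\traj{K}}^L_{\tau:T-1}, b^\star(\tilde{\traj{K}}^L_{\tau:T-1})) < J^L_{\tau:T}(b^\star)$, which closes the argument.

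The hard part will be step (ii): verifying that a belief $b^2$ can be built so that the leader's stagewise feedback re-optimization lands on the prescribed multi-stage tail $\tilde{\traj{K}}^L_{\tau:T-1}$, rather than on some single-stage-optimal deviation. This is exactly where the single-stage nature of the FSE condition~\eqref{eq:feedback_optimality_ambiguous} and the cross-stage coupling of the follower's feedback response must be handled carefully. A secondary subtlety is ensuring the Markov-perfection failure is exploitable at the fixed update time $\tau$: if the violation of~\eqref{eq:MPFSE} occurs at a time other than $\tau$, I would either relocate the improvement into the appropriate phase by also perturbing $b^1$, or embed the improving sub-tail into the tail from $\tau$ while holding the intermediate gains at their equilibrium values, so that the reached state at the point of deviation is unchanged.
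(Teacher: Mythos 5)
Your proposal is correct in substance and rests on the same two pillars as the paper's own proof: (i) failure of Markov perfection yields a leader gain sequence that strictly beats the FSE gains against the true response $b^\star$, and (ii) the richness of $\mc{F}$ lets you fabricate beliefs whose induced (truncated) equilibria reproduce that better sequence, glued across $\tau$ by time-consistency. The decomposition differs, however. The paper takes a full-horizon minimizer $\traj{K}^{L'} \neq \traj{K}^{L\star}$ and fabricates \emph{both} beliefs, imposing $K^{L1}_{0:\tau-1}=K^{L'}_{0:\tau-1}$ and $\hat K^{L2}_{\tau:T-1}=K^{L'}_{\tau:T-1}$; you instead pin $b^1=b^\star$, so that the pre-update phase and the state $x^\star_\tau$ coincide exactly with the true-belief run (and the pair $(b^\star,b^2)$ still satisfies $(b^1,b^2)\neq b^\star$), and you fabricate only $b^2$ to implement a strictly better tail. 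Your version is more economical --- one belief, and \eqref{eq:other_Model_better_FB} reduces to a pure tail comparison --- but it is only self-contained when the violation of \eqref{eq:MPFSE} is exploitable at $t=\tau$ from $x^\star_\tau$; if the violation occurs only at some $t^*<\tau$, your fallback of ``also perturbing $b^1$'' is precisely the paper's two-belief construction, so in full generality the two arguments merge. One further shared caveat: your splicing fix for $t^*>\tau$ (holding equilibrium gains on $[\tau,t^*-1]$) tacitly assumes the follower's realized response on the untouched segment is unchanged, even though the follower's feedback response at a stage depends on the leader's future gains; the paper's gain-matching assumptions gloss over the same coupling. Finally, the step you flag as ``the hard part'' --- exhibiting $b^2\in\mc{F}$ whose stagewise re-optimization \eqref{eq:feedback_optimality_ambiguous} selects the prescribed multi-stage tail while the realized follower response remains $b^\star$ of that tail --- is not actually resolved in the paper either: its proof simply posits that such $b^1,b^2$ exist. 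So your proposal is no less rigorous than the published argument on that point, and its main-case structure is a genuine (if modest) simplification of it.
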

\begin{proof}
    If the FSE is not Markov perfect, this implies that there exists some sequence of FB gains $\traj{K}^{L'}\neq\traj{K}^{L\star}$ such that they minimize $J^L_{0:T}$~\eqref{eq:cost_definition} over all other possible gain sequences with follower BR $K^{F'}_t = b^\star(K^{L'}_t)$. We construct a response function $b^1\in \mc{F}$ such that the solution to \eqref{eq:leader_problem} under response $b^1$ is denoted $\left(\traj{K}^{L1}, \traj{K}^{F1}\right)$ and $K^{F1}_{t} = b^\star(K^{F1}_{t})$ for $t \in [0,\tau-1]$. We assume that, in addition, this $b^1$ satisfies
    \begin{equation}
        K^{L1}_{0:\tau-1} = K^{L'}_{0:\tau-1} \text{ and } K^{F1}_{0:\tau-1} = K^{F'}_{0:\tau-1}.
    \end{equation}
    Similarly, we can define another $b^2\in \mc{F}$ such that the solution to \eqref{eq:truncated_problem} under belief $b^2$ is $\left(\hat{K}^{L2}_{\tau:T-1}, \hat{K}^{F2}_{\tau:T-1}\right)$ and $\hat{K}^{F2}_{t} = b^\star(\hat{K}^{F2}_{t})$. In addition, this $b^2$ satisfies
    \begin{equation}
        \hat{K}^{L2}_{\tau:T-1} = K^{L'}_{\tau:T-1} \text{ and } \hat{K}^{F2}_{\tau:T-1} = K^{F'}_{\tau:T-1}.
    \end{equation}
    The fact that the FSE is time-consistent (Def.~\ref{def:time_inconsistency}) means that these two beliefs can be combined into a joint pair $(b^1, b^2)\in\mc{F}$ that obtains lower cost than the true belief $b^\star$, implying~\eqref{eq:other_Model_better_FB}.
\end{proof}
Proposition~\ref{prop:feedback_prob1} implies that for FSEs in general, some follower BR beliefs other than the true BR may yield a lower leader cost. Since the true follower's BR FSE does not minimize the $[T]$ horizon cost over all possible gain sequences, there may be multiple pairs $(b^1, b^2)\in\mc{F}$ that obtain lower cost than $b^\star$. \ref{app:example2} provides an example where the wrong BR belief obtains a lower cost than the true BR belief FSE. 

Finally, we briefly discuss the closed loop Stackelberg equilibrium (CLSE)~\cite{basar_dynamic_nodate}. CLSE uses the information structure $\mc{I}^i_t = \{x_0, \ldots, x_t\}$ for all $t$ and $i\in\{L,F\}$, and provides stronger optimality guarantees than both FSE and OLSE~\cite{kydland_equilibrium_1977}. However, the CLSE is more complex to compute, even in LQ games, and is generally time-inconsistent~\cite{basar_time_1989}, except when the leader can enforce a team-solution on the follower. While this paper does not study the CLSE, we would like to point out that despite the stronger guarantees, the CLSE may suffer from similar optimality issues as the OLSE in the Stackelberg game with BR updates.

\section{Simulation results}\label{sec:simulations}
We construct a Stackelberg LQ game that represents a trajectory collision avoidance interaction, such as a human-robot encounter where limited communication is available, for which we verify our proven results from Section~\ref{sec:issues_solutions}.
\subsection{Simulation set up}
We treat each player as two dimensional double integrator with dynamics
\begin{equation}
    x^i_{t+1} = \begin{bmatrix}
        1 & 1\\ 0 & 1
    \end{bmatrix} x^i_t + \begin{bmatrix}
        0.5 \\ 1
    \end{bmatrix} u^i_t, \ \forall t \in [T-1], \ i \in \{L,F\},
\end{equation}
and a reference trajectory $r^i_t\in \reals^{2}$ that follow independent dynamics given by $r^i_{t+1} = \Sigma^i r^i_t \quad \forall t \in [T-1],\ i \in \{L,F\}.$
The overall system state consists of the leader's and follower's individual states and reference trajectories, $x_t = \big[{x^L_t}^\top, {x^F_t}^\top, {r^L_t}^\top, {r^F_t}^\top\big]^\top$.
The leader's objective $J^L_t$ is to simultaneously avoid the follower while tracking its reference trajectory. Its state cost matrix $Q^L$ is given by
\begin{equation}
    Q^L = \begin{bmatrix}
        Q^{L1} & Q^{L2} & Q^{L3} & 0\\ {Q^{L2}}^\top & \varepsilon I_{2\times2} & 0& 0\\
        {Q^{L3}}^\top & 0 & Q^{L4} &0\\
        0 & 0 & 0 & 0
    \end{bmatrix} \in \mb{S}^{8}_{+},
\end{equation}
where $Q^{L2} \succ 0$, $Q^{L3} \prec 0$, $\varepsilon > 0$.\\
We consider three possible follower intentions: (T) tracking the leader, (I) indifferent to the leader, or (A) avoiding the leader. The state cost matrix $Q^F$ for each intention is
\begin{equation}
\begin{aligned}
    Q^F{(T)} &= \begin{bmatrix}
        Q^{F1}{(T)} & Q^{F3}{(T)} & 0 & 0\\
        {Q^{F3}{(T)}}^\top & Q^{F2}{(T)} & 0 & 0\\
        0 & 0 & 0 & 0\\
        0 & 0 & 0 & 0
    \end{bmatrix} \in \mb{S}^{8}_{+}, \\
        Q^F{(I)} &= \begin{bmatrix}
        0 & 0 & 0 & 0\\
        0 & Q^{F2}{(I)} & 0 & Q^{F4}{(I)}\\
        0 & 0 & 0 & 0\\
        0 & {Q^{F4}{(I)}}^\top & 0 & Q^{F5}{(I)}
    \end{bmatrix} \in \mb{S}^{8}_{+},\\
        Q^F{(A)} &= \begin{bmatrix}
        \alpha I_{2\times2} & Q^{F3}{(A)} & 0 & 0\\
        {Q^{F3}{(A)}}^\top & Q^{F2}{(A)} & 0 & Q^{F4}{(A)}\\
        0 & 0 & 0 & 0\\
        0 & {Q^{F4}{(A)}}^\top & 0 & Q^{F5}{(A)}
    \end{bmatrix} \in \mb{S}^{8}_{+},
\end{aligned}
\end{equation}
where $Q^{F3}{(T)} \prec 0$, $Q^{F4}{(I)} \prec 0$, $Q^{F4}{(A)} \prec 0$ and $\alpha >0$.\\
\textbf{Intention estimation algorithm}. The leader updates its strategies either by assuming one of three fixed follower intentions (T, I, A) or using an adaptive control (Ad). In the adaptive control, the leader updates its assumption about the follower’s BR using a Multiple Model Adaptive Estimator (MMAE)~\cite{estTrackNav}. 

At time $t \in [T]$, the BR belief is $b_t = \argmax_{b} P\big(b|x_t\big)$,
where $P\big(b|x_t\big)$ is the probability of BR $b \in \mc{F}$ based on the current state $x_t \in \reals^n$. The probabilities are initially equal, $P\big(b|x_0\big) = \frac{1}{3}$, and updated according to a Bayesian rule
\begin{equation}
    P\big(b|x_t\big) = \frac{P\big(x_t|b\big) \cdot P(b)}{\sum_{\beta=1}^3 P\big(x_t|\beta\big) \cdot P(\beta)},
\end{equation}
where $P(b) = P\big(b|x_{t-1}\big)$. The model probability for each BR belief $b$ is determined by applying a softmin function to the corresponding residual norm, $\|e_t({b})\|_2$,
with residual $e_t(b) = x_t - \hat{x}_t(b)$, and $\hat{x}_t(b)$ being the propagated dynamics~\eqref{eq:dynamics} if the follower had BR $b$. \\
\textbf{Monte Carlo simulation setup.}
For each metric we run 1000 Monte Carlo simulations with a horizon $T=20$. In each one, the initial state $x_0$ is sampled from a uniform distribution where $\|x_0\|_{\infty}\leq20$ and the reference trajectories are $\Sigma^i = \sigma I^{2\times 2}$, where $0<\sigma\leq1$ is sampled uniformly. 
\subsection{OL information structure}\label{sec:OL_sim_results}
\subsubsection{Optimality against other beliefs} This case study verifies that an incorrect BR belief may lead to obtaining the lowest cost. The following matrix shows for each follower ground truth, what percentage of times each BR belief obtains the lowest cost. 
\begin{figure}[H]
    \centering
\includegraphics[width=1\linewidth]{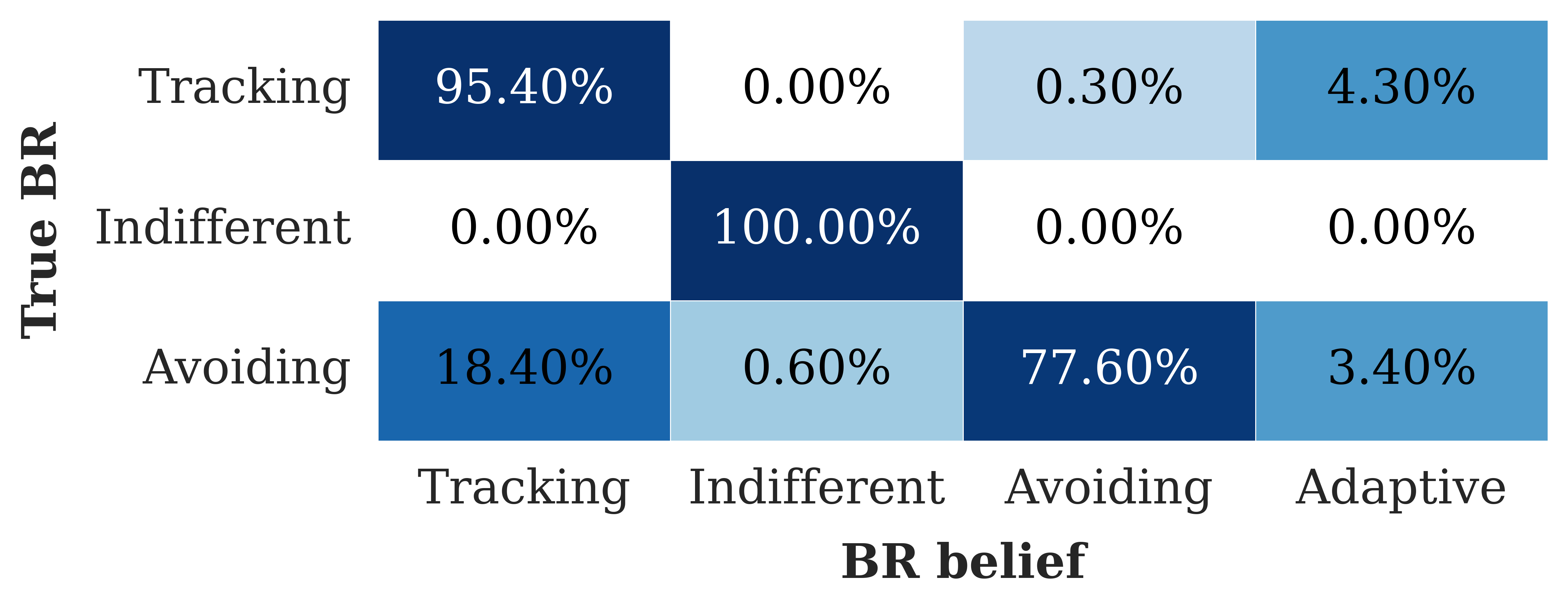}
    \caption{Percent of simulations with lowest cost achieved by each BR belief under OL information structure when $\tau = 1$.}
    \label{fig:OL_conf_mat}
\end{figure}
\subsubsection{Update time case study}
We examine how the update time $\tau$ affects how often each BR belief yields the lowest cost. Figure \ref{fig:OL_update_time} shows this for the case when the follower's true BR is avoiding.
\begin{figure}[H]
    \centering
    \includegraphics[width=1\linewidth]{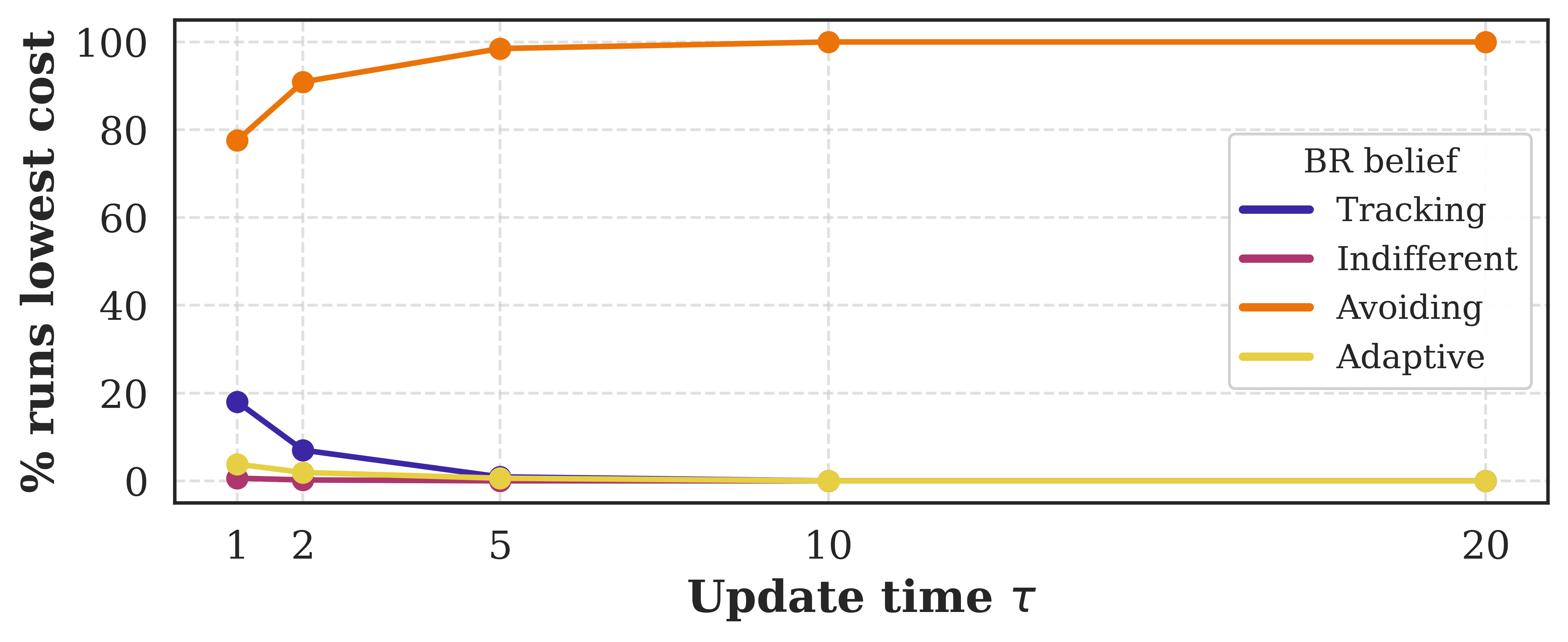}
    \caption{Percentage of simulations where each BR belief obtains the lowest cost for $\tau = 1, 2, 5, 10, 20$ under OL information structure.}
    \label{fig:OL_update_time}
\end{figure}
\subsection{Feedback information structure}\label{sec:FB_sim_results} 
\subsubsection{Optimality against other beliefs}
Identical to Figure~\ref{fig:OL_conf_mat} in setup, but for FSE
(App.~\ref{app:lq_params}).
\begin{figure}[H]
    \centering
    \includegraphics[width=1\linewidth]{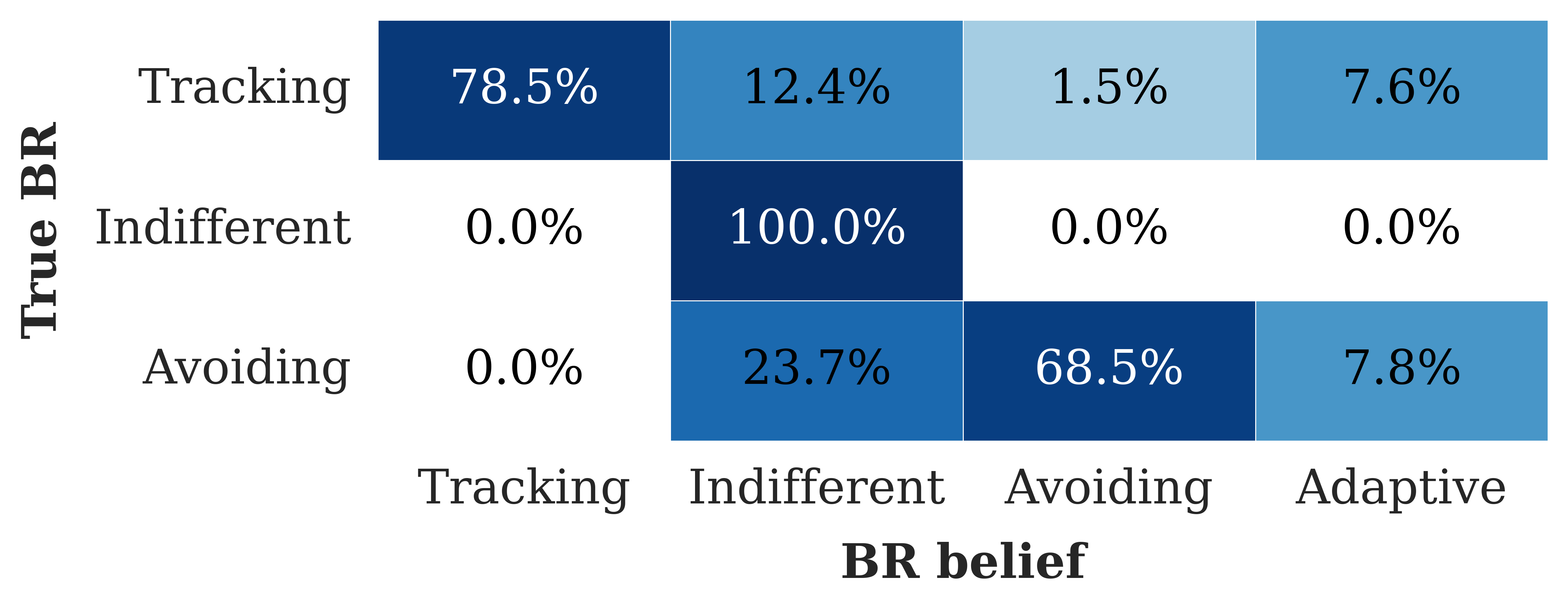}
    \caption{Percent of simulations with lowest cost achieved by BR beliefs under FB information structure and $\tau=1$.}
    \label{fig:FB_confusion_mat}
\end{figure}
\subsubsection{Update time case study}Identical to Figure~\ref{fig:OL_update_time} in setup but for FSE 
(App.~\ref{app:lq_params}).
\begin{figure}[H]
    \centering
    \includegraphics[width=1\linewidth]{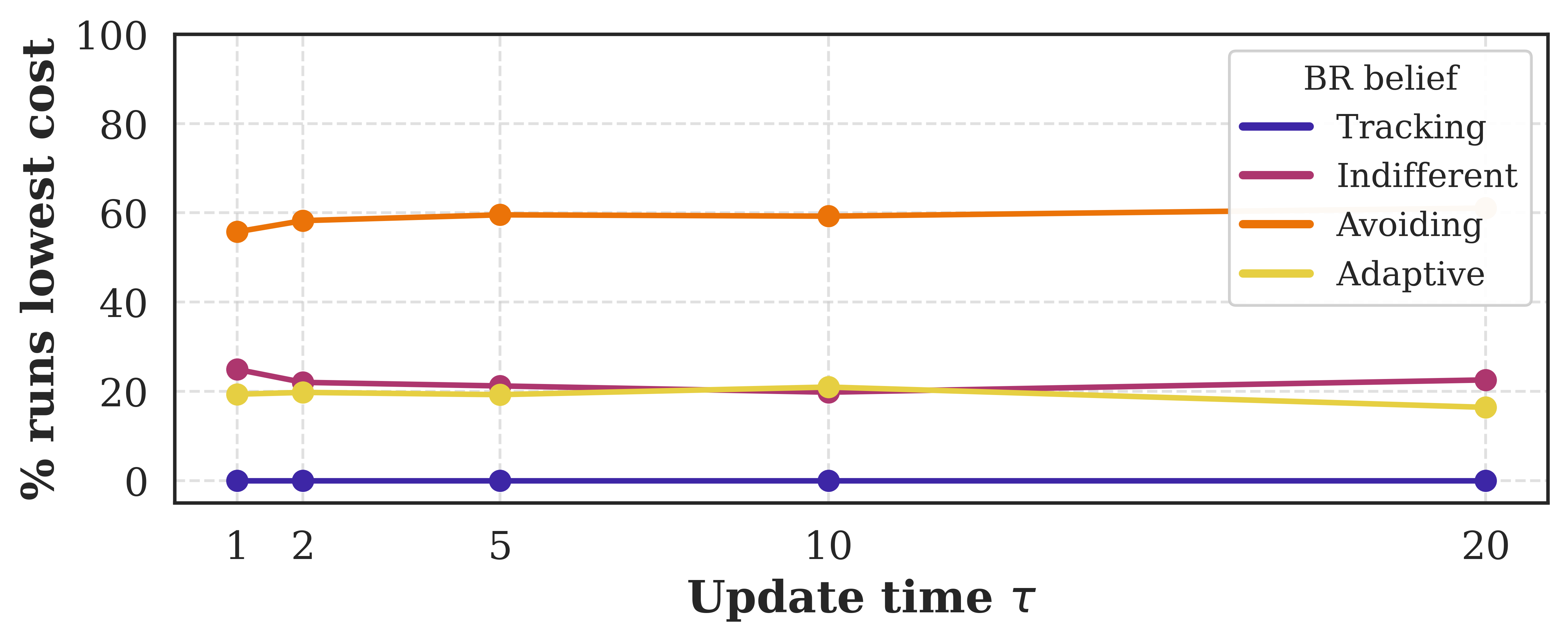}
    \caption{Percentage of simulations where each BR belief obtains the lowest cost for $\tau = 1, 2, 5, 10, 20$ under FB information structure.}
    \label{fig:FB_update_time}
\end{figure}
\subsection{Cost difference}
For the same setting as Figures~\ref{fig:OL_conf_mat} and~\ref{fig:FB_confusion_mat}, we provide the average \% of the cost by which each BR belief exceeds the minimum cost in each run. 
\begin{figure}[H]
    \centering
    \includegraphics[width=1\linewidth]{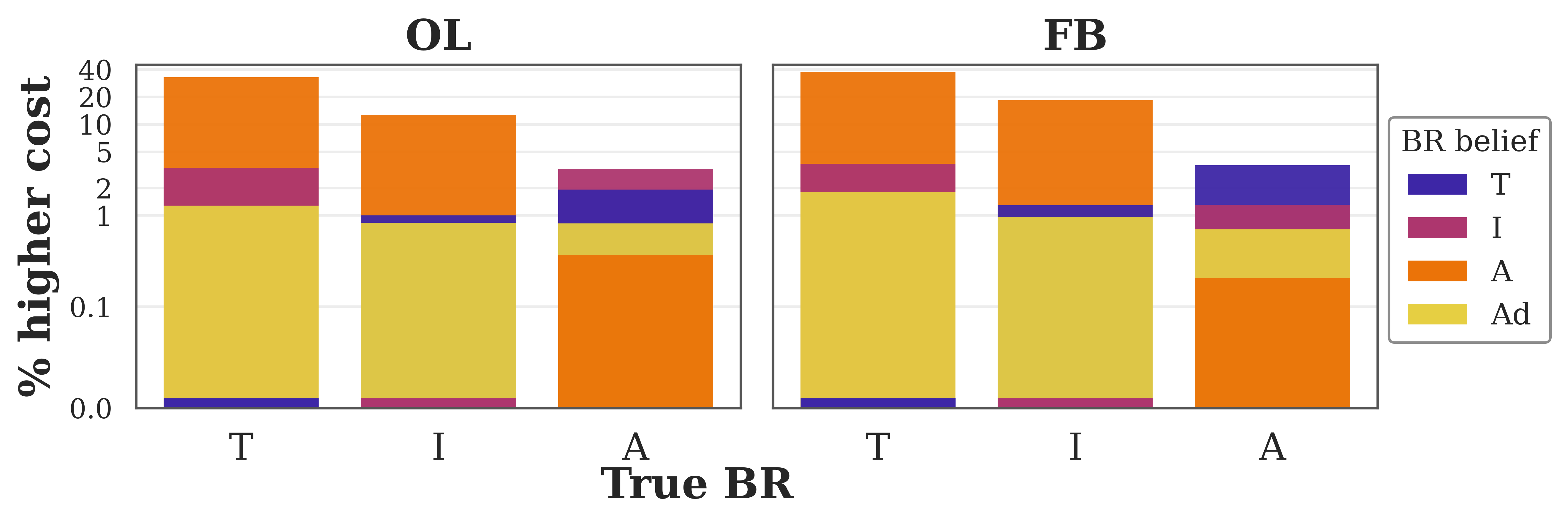}
    \caption{Average percentage by which each belief’s cost exceeds the best model’s cost for each true BR and $\tau=1$ under OL and FB information structures.}
    \label{fig:cost}
\end{figure}
\subsection{Discussion}
The first key observation is that Figures~\ref{fig:OL_conf_mat} and~\ref{fig:FB_confusion_mat} show that the lowest cost is not always achieved by the true BR belief, verifying our results from Sections~\ref{sec:optimality_conditions_OLSE_intention} and~\ref{sec:feedback_opt_break}.
Under both information structures, when the follower is either tracking or avoiding the leader, some alternative beliefs occasionally yield better performance.

Another insight from Figures~\ref{fig:OL_conf_mat} and~\ref{fig:FB_confusion_mat} is that the performance associated with assuming a particular belief depends on the set of beliefs under consideration. 
For example, when the follower is indifferent, assuming the true BR consistently results in the lowest cost. 
However, expanding the set of possible behaviors may alter this outcome. 
This aspect is a subject for future investigation.

Figures~\ref{fig:OL_update_time} and~\ref{fig:FB_update_time} explore the relationship between the update time $\tau$ and the frequency with which the true BR belief is outperformed by others. 
In the OL case, we observe that more frequent updates lead to more instances in which alternative models perform better, with $\tau = T$ corresponding to the optimality condition of the OLSE~\eqref{eq:optCond_OL_intention}. 
In contrast, in the FB case, the number of times each model achieves the lowest cost remains unaffected by $\tau$, which aligns with the time-consistency property of the FSE.

Finally, Figure~\ref{fig:cost} shows that, across all simulations, the true BR belief remains closest to the minimum cost overall, while the cost gap with incorrect BR beliefs is, on average, quite large. 
This finding shows that although some beliefs occasionally outperform the true BR, this does not translate into comparable overall cost levels. 
In all cases, the adaptive solution also performs near-optimally on average.

\section{Conclusion} \label{sec:conclusion}
This paper examines the optimality conditions of Stackelberg OL and FB equilibria when the leader's belief about the follower's BR is updated over the game horizon. We prove that for both OL and FB information structures an incorrect BR belief may result in lower cumulative cost. We illustrate that these inconsistencies arise in a LQ game modeling a collision avoidance scenario. Future work will focus on characterizing the BR beliefs that may yield a lower cost than the true BR and identifying the conditions under which such outcomes occur.

\bibliographystyle{IEEEtran}
\bibliography{citations}
\appendix
\subsection{LQ game}
\label{app:lq_params}
We consider a finite horizon dynamic LQ game. The cost functions of both players is denoted as 
\begin{equation}
    J^i_{0:T} = x_T^\top Q^i x_T + \sum_{t=0}^{T-1} x_t^\top Q^i x_t + {u^i_t}^\top R^i u^i_t, \, i \in \{L,F\},
    \label{eq:LQ_cost_func}
\end{equation}
where $Q^i \succeq 0$ and $R^i \succ 0$ to guarantee strict convexity of the cost functions. We derive the trajectory-level cost matrices as $\bar{Q}^i = \text{blkdiag}(Q^i, Q^i, \ldots, Q^i) \in \reals^{n_i(T+1)\times n_i(T+1)}$ and $\bar{R}^i = \text{blkdiag}(R^i, R^i, \dots, R^i) \in \reals^{m_i T\times m_i T}$ for $i\in \{L, F\}$. The different follower's BR are associated with different $Q^F, R^F$ matrices, defined during usage.
The state trajectory becomes~\eqref{eq:state_trajectory} where the matrices $H \in \reals^{n(T+1)\times n}$, $G^i \in \reals^{n(T+1) \times m_iT}$ are given by
\begin{equation}
        H = \begin{bmatrix}
    I \\ A \\ A^2\\
    \vdots \\
     A^{T}
    \end{bmatrix}, G^i = \begin{bmatrix}
    0 & 0 & \cdots & 0 \\
    B^i & 0 & \cdots & 0 \\
    A B^i & B^i & \cdots & 0 \\
    \vdots & \vdots & \ddots & \vdots \\
    A^{T-1}B^i & A^{T-2}B^i & \cdots & B^i
    \end{bmatrix}.
    \label{eq:H_G_matrix}
\end{equation}

Under OL information structure~\eqref{eq:information_structures} and given $\traj{u}^L$, the follower's BR  $\traj{u}^F = b^\star(\traj{u}^L)$ is given by
\begin{equation}\label{eqn:br_ol_lq}
\begin{aligned}
    b^\star(\traj{u}^L) &= - \big({G^F}^\top \bar{Q}^F G^F + \bar{R}^F\big)^{-1}{G^F}^\top \bar{Q}^F \big( H x_0 + G^L \traj{u}^L\big)\\
    &= \hat{G}^F \traj{u}^L + \hat{H}^F x_0,   
\end{aligned}
\end{equation}
which defines the matrices $\hat{G}^F$ and $\hat{H}^F$. Under $b^\star$ and $\traj{u}^F$, the leader's OLSE strategy is 
\begin{multline}
     \traj{u}^L =- \left[( G^F \hat{G}^F + G^L)^\top \bar{Q}^L (G^F \hat{G}^F + G^L)+ \bar{R}^L \right]^{-1}\\
     ( G^F \hat{G}^F + G^L)^\top \bar{Q}^L(H+G^F \hat{H}^F) x_0.       
\end{multline}
Under FB information structure~\eqref{eq:information_structures}, we first define each player's value function before defining the FSE. For player $i$ at time $t$, we recursively define its value function as
\begin{equation}
    V_t^L = \begin{cases}
        Q^L & t = T,\\
         Q^L + P_t^\top V^L_{t+1} P_t + (K_t^F)^\top R^L K_t^F  & t \in [T-1],
    \end{cases}
\end{equation}
\begin{equation}
    V_t^F = \begin{cases}
        Q^F & t = T,\\
        Q^F + P_t^\top V^F_{t+1} P_t + U_t^\top R^F U_t & t \in [T-1],
    \end{cases}
\end{equation}
where  $P_t = (I - B^F K^F_t)(A - B^L K^L_t)$ and $U_t = K^F_t (A - B^L K^L_t)$ for all $t \in [T-1]$. Using the one-step-ahead value functions $V^i_{t+1}$, the leader and follower's FSE strategies at time step $t$ is given by
\begin{equation}
    \begin{aligned}
        K^F_t =& \big(R^{F} + {B^F}^\top V^F_{t+1} B^F\big)^{-1}  {B^F}^\top V^F_{t+1},\\
        K^L_t=& \Big( {B^L}^\top \big(I - B^F K^F_t\big)^\top V^L_{t+1} \big(I - B^F K^F_t\big)B^L+ R^{L} \Big)^{-1} \\
        &\Big( {B^L}^\top \big(I - B^F K^F_t\big)^\top
    V^L_{t+1}\big(I - B^F K^F_t\big) A\Big),\\
    &\forall t \in [T-1].  
    \end{aligned}
\end{equation}  

\subsection{Numerical OLSE example}\label{app:example1}
Consider a LQ Stackelberg game (App.~\ref{app:lq_params}) with scalar state, scalar leader and follower controls, and OL information structure~\eqref{eq:information_structures}.
The state dynamics is
\begin{equation}
    x_{t+1} = 1.7 x_t + 1.4 u^L_t + 0.5 u^F_t, \quad \forall t\in [T-1].
\end{equation}
The time horizon is $T=5$, update time is $\tau = 3$, and the initial state is $x_0 = 7.6$. The leader's cost function~\eqref{eq:LQ_cost_func}  
has parameters $(Q^L, R^L) = (16, 17)$ and the follower's cost function~\eqref{eq:LQ_cost_func}
with true BR, $b^\star$, is given by $(Q^F,R^F) = (7, 19)$. Let $b'\in \mc{F}$ denote the follower's BR if $(Q^F, R^F) = (8, 9)$. We assume that $b^1=b^2=b'$. 

The leader first calculates the strategies $\traj{u}^{L\star}$ and $\traj{u}^{L'}$ under beliefs $b^\star$ and $b'$, given by
\begin{align}
    \traj{u}^{L\star} &= \left[-4.6, -0.86, 0.06, 0.19, 0.12 \right], \label{eq:uL_calc_0}\\
    \traj{u}^{L'} &= \left[-2.28, 0.4, 0.73, 0.52, 0.25 \right].
\end{align}
The follower's BR is given by
\begin{align}
 b^\star\big(\traj{u}^{L\star}\big) 
&= \left[-7.53, -4.14, -2.29,-1.21, -0.53 \right],\\
b^\star\big(\traj{u}^{L'}\big) 
&= \left[-13.43, -7.57, -4.25,-2.27, -0.98 \right].
\end{align}
At update time $\tau$, the leader's cost under both BR beliefs is 
\begin{equation}
    J^L_{0:2}(b^\star)~=~1443.18, \quad J^L_{0:2}(b')~=~1227.72.
\end{equation} 
Thus, $J^L_{0:2}(b')<J^L_{0:2}(b^\star)$, this shows precisely an example where the optimal solution over the horizon $[T]$ is not optimal over a shorter horizon, $[0, \tau)$. \\
At $t=3$ the state under BR beliefs $b^\star$ is $x^\star_3~=~1.22$ and the state under BR belief $b'$ is $x'_3~=~2.13$. The updated leader controls are $\hat{u}^{L\star}_{3:4}$ and $\hat{u}^{L'}_{3:4}$  given by 
\begin{align}
    \hat{u}^{L\star}_{3:4} = \left[-1.06,-0.3 \right],\label{eq:uL_calc_3} \quad \hat{u}^{L'}_{3:4} = \left[-1.56,-0.41\right].
\end{align} 
We verify that the OLSE of this problem is time-inconsistent since $u^{L\star}_{3:4}~\neq~\hat{u}^{L\star}_{3:4}$. The follower best responds to the leader with
\begin{align}
b^\star\big(\hat{u}^{L\star}_{3:4}\big) = \left[-0.21, -0.07 \right],\quad
    b^\star\big(\hat{u}^{L'}_{3:4}\big) = \left[-0.6,-0.23\right].
\end{align}
Finally, the horizon $[3,5]$ incurs cost under belief $b^\star$ and $b'$
\begin{equation}
    J^L_{3:5}(b^\star)=50.72, \quad J^L_{3:5}(b')=162.88.
\end{equation}
Despite the cost for the horizon $[3,5]$ being greater for $b'$, this belief still obtains lower total cost over $[T]$ since $J^L_{0:5}(b^\star)~=~1493.9$ and $J^L_{0:5}(b')~=~1390.6$.

\subsection{Numerical FSE example}\label{app:example2}
This example presents a LQ Stackelberg game (App.~\ref{app:lq_params}) 
with FB information structure~\eqref{eq:information_structures} where assuming some $b'\in\mc{F}$ achieves a lower cost than $b^\star\in\mc{F}$. Consider the scalar game dynamics
\begin{equation}
    x_{t+1} = 1.4 x_t + 1.7 u^L_t + 1.7 u^F_t, \quad \forall t \in [T-1],
\end{equation}
with a horizon $T=8$, update time $\tau = 3$, and initial condition $x_0 = -5.6$. The leader's cost~\eqref{eq:LQ_cost_func}
is defined by the matrices $(Q^L, R^L) = (7, 16)$. The follower's true BR $b^\star$ is given by $(Q^F, R^F) = (4, 24)$. We consider the case where the leader also assumes a BR $b^1=b^2=b'$ defined by $(Q^F, R^F) = (29, 12)$. Because the FSE is time-consistent and the beliefs remain constant, the leader strategy under beliefs $b^\star$ and $b'$, respectively, are $\traj{K}^{L\star}$ and $\traj{K}^{L'}$ and the follower's BR in either case is $\traj{K}^{F\star}$ and $\traj{K}^{F'}$,
all control numerical values are given by
\begin{align}
    \traj{K}^{L\star} =& [0.31, 0.31, 0.31, 0.31,0.32,
    0.32, 0.33, 0.3],\\
    \traj{K}^{L'} =& [0.01, 0.01, 0.01, 0.01, 0.01, 0.01, 0.01, 0.02],\\ 
    \traj{K}^{F\star} =& [0.27, 0.27, 0.27, 0.26, 0.26,0.26, 0.25, 0.19],\\
    \traj{K}^{F'} =& [0.37, 0.37, 0.37, 0.37, 0.36, 0.35, 0.31, 0.19]. 
\end{align}
After playing the different strategies, the leader observes costs
\begin{equation}
J^L_{0:7}\left(b^\star\right) = 347.2, \quad J^L_{0:7}\left(b' \right) = 299.4.
\end{equation}
Which shows that the case where the leader assumed the wrong follower BR obtained a lower cost than the true BR, verifying Proposition~\ref{prop:feedback_prob1}.

\end{document}